\begin{document}

\theoremstyle{plain}
\newtheorem{theorem}{Theorem}
\newtheorem{lemma}[theorem]{Lemma}
\newtheorem{corollary}[theorem]{Corollary}
\newtheorem{conjecture}[theorem]{Conjecture}
\newtheorem{proposition}[theorem]{Proposition}

\theoremstyle{definition}
\newtheorem{definition}{Definition}
\newtheorem{algorithm}{Algorithm}
\theoremstyle{remark}
\newtheorem*{remark}{Remark}
\newtheorem{example}{Example}

\title{Recursive QAOA outperforms the original QAOA 
for the MAX-CUT problem on complete graphs
}

\author{Eunok Bae}
\email{eobae@hanyang.ac.kr}
\affiliation{Department of Mathematics, Research Institute for Natural Sciences, Hanyang University, Seoul 04763, Korea}
\author{Soojoon Lee}
\email{level@khu.ac.kr}
\affiliation{Department of Mathematics, Kyung Hee University, Seoul, 02447, Republic of Korea}

\date{\today}

\begin{abstract}
 Quantum approximate optimization algorithms are hybrid quantum-classical variational algorithms designed to approximately solve combinatorial optimization problems such as the MAX-CUT problem. In spite of its potential for near-term quantum applications, it has been known that quantum approximate optimization algorithms have limitations for certain instances to solve the MAX-CUT problem, at any constant level $p$. Recently, the recursive quantum approximate optimization algorithm, which is a non-local version of quantum approximate optimization algorithm, has been proposed to overcome these limitations. However, it has been shown by mostly numerical evidences that the recursive quantum approximate optimization algorithm outperforms the original quantum approximate optimization algorithm for specific instances.
 In this paper, we analytically prove that the recursive quantum approximate optimization algorithm is more competitive than the original one to solve the MAX-CUT problem for complete graphs with respect to the approximation ratio.
\end{abstract}

\maketitle
 
 \section{Introduction}
 \label{sec:intro}
 There has been a growing interest in practical quantum computing in the noisy intermediate-scale quantum (NISQ) era. The NISQ devices have several restrictions due to noise in quantum gates and limited quantum resources~\cite{Pre18}. 
 Diverse disciplines, for instances, combinatorial optimization, quantum chemistry, and machine learning, are regarded as potential areas of application to demonstrate a quantum advantage over the best known classical methods in the NISQ devices. 
 
 Quantum approximate optimization algorithm (QAOA) was designed to solve hard combinatorial optimization problems such as the MAX-CUT problem~\cite{FGG14}.
 QAOA is a hybrid quantum-classical algorithm consisting of a parametrized quantum circuit and a classical optimizer to train it, and it has been proposed as one of the principal approaches to address the restrictions of the NISQ devices since the parameters such as the circuit depth can be handled~\cite{FGG14}.
 In particular, QAOA at low levels such as the level-1 QAOA is more suitable for the NISQ algorithms since QAOA at low levels executes small depth quantum circuit while high level algorithms can produce uncorrectable errors on the NISQ devices.
 However, it has been known that QAOA at any constant level has limited performance to solve the MAX-CUT problem on several instances~\cite{Has19, BKKT19, FGG20, Mar21, BM22}.
 
 The recursive QAOA~\cite{BKKT19}, 
 the RQAOA for short, has been recently proposed to overcome the limitations of QAOA. 
However, very few results on the RQAOA have been known~\cite{BKKT19, BGGS21, BKKT22}. 
Moreover, while it was analytically proved in one of them that the level-1 RQAOA performs better than any constant level QAOA for solving the MAX-CUT problem on cycle graphs~\cite{BKKT19}, 
 the others have given only numerical evidences to claim similar arguments for finding the largest energy of Ising Hamiltonian~\cite{BGGS21} and for graph coloring problem~\cite{BKKT22}. 

It is natural to ask whether the RQAOA can always perform better than the original QAOA. We may expect a positive answer because the RQAOA executes QAOA recursively as its subroutine. The simplest way to check the answer would be to compare directly the performance of these algorithms with respect to the approximation ratio for various instances. 

It has been known that 
QAOA has the limitations for solving the MAX-CUT problem on simple graphs with large girth and triangle-free regular graphs with constant degree~\cite{WHJR18, Has19, WL20, Mar21}. However, complete graphs are not included in these cases since they have small girth and they are not triangle-free graph with non-constant degree (depends on the number of vertices). 
Thus we focus on complete graphs in this work, 
and show that the original QAOA still has a limited performance for solving the MAX-CUT problem on even complete graphs 
although we can find the solution by intuition.


 In this paper, we compare the performance of the original QAOA and the RQAOA for solving the MAX-CUT problem on complete graphs with $2n$ vertices, and show that the approximation ratio of the level-1 RQAOA is exactly one whereas that of the level-1 QAOA is strictly less than $1-1/8n^2$. 
 This implies that the exact solution for the MAX-CUT problem on complete graphs can be found by using the level-1 RQAOA while it can never be done by performing the level-1 QAOA.
 
 In addition, if the level of these algorithms is higher 
 or the order and size of the graph are larger, 
 then the error rate also becomes larger because the circuit depth increases. 
 Especially, since the complete graph has the largest graph size among simple graphs with the same order, it becomes more difficult to solve this problem with higher level algorithms than the level-1 algorithm in NISQ devices.
 In other words, it is hard to find the solution of the MAX-CUT problem on complete graphs if we use high level QAOA due to uncorrectable errors.
 Therefore, our result implies that the RQAOA could be a better algorithm than the QAOA for the NISQ devices.
 
 This paper is organized as follows. 
 In Sec.~\ref{sec:qaoa}, we briefly review the MAX-CUT problem and QAOA to solve it. 
 In Sec.~\ref{sec:analysis}, we introduce the RQAOA which is the non-local variant of QAOA, and prove that the {level-1} RQAOA outperforms the level-1 QAOA for solving the MAX-CUT problem on complete groups. In Sec.~\ref{sec:conclusion}, we summarize our result, and discuss its meaning and related works.
 
\section{Preliminaries: QAOA and RQAOA}
 \subsection{QAOA for the MAX-CUT problem}
 \label{sec:qaoa}
 
 Let $G=(V,E)$ be a graph with the set of vertices $V=\{1,2,\dots,n\}$ and the set of edges $E=\{(i,j):i,j \in V\}$. The MAX-CUT problem is a well-known combinatorial optimization problem which aims to split $V$ into two disjoint subsets such that the number of edges spanning the two subsets
 is maximized. The MAX-CUT problem can be formulated by maximizing the cost function 
 \[
 C(\mathbf{x})=\frac{1}{2} \sum_{(i,j)\in E} \left(1-x_i x_j \right)
 \]
 for $\mathbf{x}=(x_1,x_2,\dots,x_n) \in \{-1,1\}^n$. 
 This classical cost function can be converted to a quantum problem Hamiltonian 
 \[
 H_C=\frac{1}{2} \sum_{(i,j)\in E} \left( I-Z_i Z_j \right),
 \]
 where $Z_i$ is the Pauli operator $Z$ acting on the $i$-th qubit.
 The $p$-level QAOA, denoted by QAOA$_p$, for the MAX-CUT problem can be described as 
 the following algorithm.
 \begin{algorithm}[QAOA$_{p}$~\cite{FGG14}]
The QAOA$_{p}$ is as follows.
 \begin{enumerate}
     \item Initialize the quantum processor in $\ket{+}^{\otimes n}$.
     \item Generate a variational wave function
     \[\ket{\psi_p(\bm{\beta},\bm{\gamma})}
     =e^{-i\beta_pH_B}e^{-i\gamma_pH_C}
     \cdots e^{-i\beta_1H_B}e^{-i\gamma_1H_C}
     \ket{+}^{\otimes n},
     \]
     where $\bm{\beta}=(\beta_1,\beta_2,\ldots,\beta_p)$,
     $\bm{\gamma}=(\gamma_1,\gamma_2,\ldots,\gamma_p)$, 
     $H_B=\sum_{i=1}^n X_i$ is a mixing Hamiltonian, 
     and $X_i$ is the Pauli operator $X$ acting on the $i$-th qubit.
     \item Compute the expectation value 
     \[F_p(\bm{\beta},\bm{\gamma})=\bra{\psi_p(\bm{\beta},\bm{\gamma})}H_C\ket{\psi_p(\bm{\beta},\bm{\gamma}})\] 
     by performing the measurement in the computational basis.
     \item Find the optimal parameters 
     \[(\bm{\beta}^*,\bm{\gamma}^*)= \mathrm{argmax}_{\bm{\beta},\bm{\gamma}} F_p(\bm{\beta},\bm{\gamma})\]
     using a classical optimization algorithm.
 \end{enumerate}
 \end{algorithm}
  
The approximation ratio $r$ of QAOA$_{p}$ is defined as \[r=\frac{F_p(\bm{\beta}^*,\bm{\gamma}^*)}{C_{\max}},\] 
where $C_{\max}=\max_{\mathbf{x}\in \{-1,1\}^n}C(\mathbf{x})$.

 \subsection{RQAOA}
 \label{sec:analysis}

In this section, we briefly review the concept of the RQAOA. 
For the level-$p$ RQAOA, denoted by RQAOA$_p$, we consider an Ising-like Hamiltonian 
$$H_n=\sum_{(i,j) \in E}J_{i,j}Z_iZ_j$$ 
which is defined on a graph $G_n=(V,E)$ with $|V|=n$, where $J_{i,j}\in \mathbb{R}$ are arbitrary. The RQAOA$_{p}$ attempts to approximate 
$$\max_{\mathbf{x} \in \{-1,1\}^n} \bra{\mathbf{x}}H_n\ket{\mathbf{x}},$$ 
where 
$
\ket{\mathbf{x}}=\ket{x_1,\dots,x_n}
$, 
$Z\ket{x_i}=x_i\ket{x_i}$ for each $i=1,\dots,n$, 

and it can be described by the following algorithm.

\begin{algorithm}[RQAOA$_{p}$~\cite{BKKT19}]
The RQAOA$_{p}$ consists of the following steps.
\begin{enumerate}
    \item Apply the original QAOA to find the optimal state $\ket{\psi_p(\bm{\beta}^*,\bm{\gamma}^*)}$ which maximizes \[\bra{\psi_p(\bm{\beta},\bm{\gamma})}H_n\ket{\psi_p(\bm{\beta},\bm{\gamma})}.\]
    \item Compute $$M_{i,j}=\bra{\psi_p(\bm{\beta}^*,\bm{\gamma}^*)}Z_iZ_j\ket{\psi_p(\bm{\beta}^*,\bm{\gamma}^*)}$$ for every edges $(i,j)\in E$.
    \item Choose a pair $(k,l)$ which maximizes the magnitude of $M_{i,j}$
    \item Impose the constraint $Z_k=\textrm{sgn}(M_{k,l})Z_l$, and  
    replace it into the Hamiltonian 
    \begin{eqnarray*}
    H_n&=&
    \sum_{(i,k) \in E}J_{i,k}Z_iZ_k +
    \sum_{i,j\neq k}J_{i,j}Z_iZ_j \\
    &=&
    \textrm{sgn}(M_{k,l})\left[\sum_{(i,k) \in E}J_{i,k}Z_iZ_l 
    \right] +    \sum_{i,j\neq k}J_{i,j}Z_iZ_j
    \end{eqnarray*}
    \item Call the RQAOA recursively to maximize the expected value of a new Ising Hamiltonian $H_{n-1}$ depending on $n-1$ variables:
    \[
    H_{n-1}=\sum_{(i,l) \in E'_0}J'_{i,j}Z_iZ_l +\sum_{(i,j) \in E'_1}J'_{i,j}Z_iZ_j, 
    \]
    where  
    \begin{eqnarray*}
    E'_0&=&\{(i,l) : (i,k)\in E\},\\
    E'_1&=&\{(i,j) : i,j \neq k\}, 
    \end{eqnarray*}
    and 
    \[
    J'_{i,j}=
    \begin{cases}
    \mathrm{sgn}(M_{k,l})J_{i,k}
     & \mathrm{if}~(i,l)\in E'_0, \\
    J_{i,j} & \mathrm{if}~(i,j)\in E'_1.
    \end{cases}
    \]
    \item The recursion stops when the number of variables reaches some suitable threshold value $n_c \ll n$, and find $\mathbf{x}^*=\mathrm{argmax}_{\mathbf{x}\in \{-1,1\}^{n_c}}\bra{\mathbf{x}}H_{n_c}\ket{\mathbf{x}}$ by a classical algorithm.
    \item Reconstruct the original (approximate) solution $\tilde{\mathbf{x}}\in\{-1,1\}^n$ from $\mathbf{x}^*$ using the constraints.
\end{enumerate}
 \end{algorithm}
 
 \section{Our result}
 \label{sec:result}
 
Now, we investigate the performance of the original QAOA$_{1}$ and RQAOA$_{1}$ for the MAX-CUT problem on complete graphs, and we have the following theorem.

\begin{theorem}
Let $K_{2n}$ be the complete graph with $2n$ vertices for $n\ge2$ and let $H_C=\frac{1}{2} \sum_{(i,j) \in E}\left(I-Z_iZ_j\right)$ be the problem Hamiltonian for the MAX-CUT problem. Then 
\begin{enumerate}
    \item RQAOA$_{1}$ achieves the approximation ratio 1.
    \item The approximation ratio of QAOA$_{1}$ is strictly less than 
    $ 1-\frac{1}{8n^2}$.
\end{enumerate}
\end{theorem}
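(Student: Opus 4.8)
The plan is to treat the two parts separately, exploiting the large automorphism group of $K_{2n}$ throughout, and to reduce everything to a single per-edge expectation.

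For part (1), the key observation is that QAOA$_1$ produces identical two-point correlators on every edge of a complete graph: the single-edge expectation $\langle Z_iZ_j\rangle$ in QAOA$_1$ depends only on local data (the degrees of $i,j$ and the number of triangles through $(i,j)$), which is the same for all edges of $K_{2n}$. First I would note that at the optimal parameters this common value $M$ is strictly negative, since any cut beating the random one forces $\langle H_C\rangle>|E|/2$ and hence $M<0$; thus $\mathrm{sgn}(M_{k,l})=-1$ and the first RQAOA$_1$ step imposes $Z_k=-Z_l$. Substituting this constraint into the (uniform-coupling) Ising Hamiltonian and regrouping, every edge incident to the surviving vertex $l$ acquires the retained coupling $J_{i,l}$ together with the transferred coupling $\mathrm{sgn}(M_{k,l})J_{i,k}=-J_{i,l}$, which cancel; hence $l$ decouples and the problem reduces exactly to MAX-CUT on $K_{2n-2}$, plus an isolated spin and a constant. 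I would then run an induction: on $K_{2m}$ together with any isolated spins (which carry vanishing correlators and are never selected in the maximization step), one RQAOA$_1$ step pairs two vertices on opposite sides of the cut and reduces to $K_{2m-2}$. After $n$ steps the $2n$ vertices have been grouped into $n$ vertex-disjoint pairs, each split across the cut — precisely a balanced $n$-versus-$n$ partition. Since every balanced partition cuts all $n^2$ edges and $C_{\max}=n^2$, the reconstructed assignment is optimal and the ratio equals $1$.

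For part (2) I would first use the same edge-transitivity to write $F_1(\bm\gamma,\bm\beta)=|E|\,g(\gamma,\beta)$ with $|E|=n(2n-1)$ and $g$ the common single-edge cut expectation. Inserting the degrees $d=2n-1$ and the triangle count $2n-2$ into the (light-cone-derived) QAOA$_1$ MAX-CUT edge formula gives
\[
g(\gamma,\beta)=\tfrac12+A\sin 4\beta-B\sin^2 2\beta,\qquad A=\tfrac12\sin\gamma\cos^{2n-2}\gamma,\quad B=\tfrac14\bigl(1-\cos^{4n-4}2\gamma\bigr),
\]
where the bare-cosine exponent collapses to zero precisely because $d_i+d_j-2-2(\text{triangles})=0$ on the complete graph. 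The maximization over $\beta$ is exact: collecting the $\beta$-dependence as $A\sin2\theta+\tfrac{B}{2}\cos2\theta$ with $\theta=2\beta$ yields $\max_\beta g=\tfrac12-\tfrac{B}{2}+\sqrt{A^2+B^2/4}=\tfrac12+A^2/\!\left(\sqrt{A^2+B^2/4}+\tfrac{B}{2}\right)\le\tfrac12+A^2/B$ whenever $B>0$. Since the ratio is $r=\tfrac{2n-1}{n}\max_{\gamma,\beta}g$, attaining $r=1$ would require $\max_{\gamma,\beta}g=\tfrac{n}{2n-1}$, i.e.\ $M=-\tfrac{1}{2n-1}$, which is exactly the average correlator of the optimal balanced partition; so it suffices to show this value is never reached.

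The remaining, and hardest, step is the one-dimensional estimate over $\gamma$. Writing $s=\sin^2\gamma$ and $m=2n-2$, the bound $\max_\beta g\le\tfrac12+A^2/B$ reduces the claim $r<1$ to
\[
\frac{s(1-s)^{m}}{1-(1-2s)^{2m}}<\frac{1}{2(m+1)}\qquad(0<s<1),
\]
with a slightly sharper version giving $r<1-\tfrac{1}{8n^2}$ for $n\ge4$. I expect this inequality to be the main obstacle: near $s=0$ both sides vanish linearly and their ratio tends to $\tfrac{m+1}{2m}$, which drops below $1$ only once $m\ge2$ (i.e.\ $n\ge2$), so the bound is genuinely tight at small $s$ and resists crude estimation. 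My plan is to split the range of $s$. For small $s$, I would divide out the common linear factor and compare the two sides using the envelopes $(1-s)^{m}\le e^{-ms}$ and $1-(1-2s)^{2m}\ge 1-e^{-4ms}$ (the latter valid for $s\le\tfrac12$), reducing to an elementary inequality in the single variable $ms$. For $s$ bounded away from $0$ the numerator $s(1-s)^m$ is exponentially small in $n$, while the denominator stays bounded below except near the isolated point $s=1$, where the ratio again vanishes, so that regime is harmless. The explicit constant $\tfrac{1}{8n^2}$ and the threshold $n\ge4$ would then emerge from tracking the lower-order terms in the small-$s$ estimate, with the cases $n=2,3$ checked directly.
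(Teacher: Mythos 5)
Your part (1) is correct and is essentially the paper's argument: the correlators $M_{ij}$ coincide on all edges by symmetry, they are strictly negative at the optimum, the constraint $Z_k=-Z_l$ makes every coupling at the surviving vertex cancel (so one step reduces $K_{2m}$ exactly to $K_{2m-2}$ plus an isolated spin), and iterating pairs the vertices into a balanced partition of value $n^2=C_{\max}$. Your justification of $M<0$ (any parameters beating the random cut force $\langle H_C\rangle>|E|/2$, hence $M<0$) is in fact cleaner than the paper's unproved assertion that negativity ``can be easily shown.''

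Part (2), however, has a genuine gap, caused by two compounding problems. First, you mis-recalled the $p=1$ edge formula: on $K_{2n}$ the triangle term is $1-\cos^{2n-2}(2\gamma)$ — the exponent is the number of triangles through the edge, $\lambda=2n-2$ — not $1-\cos^{4n-4}(2\gamma)$; in your variables the denominator should be $1-(1-2s)^{m}$ with $m=2n-2$, not $1-(1-2s)^{2m}$. Second, and this is the step that actually fails: after the (correct) exact maximization over $\beta$, which gives $\max_\beta g=\tfrac12+\sqrt{A^2+B^2/4}-\tfrac{B}{2}$ exactly as in the paper, you relax this to $\tfrac12+A^2/B$ and propose to prove $A^2/B<\tfrac{1}{2(m+1)}$ for all $s\in(0,1)$. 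With the correct exponent this inequality is \emph{false}: as $s\to 0^+$ one has $A^2/B=\frac{s(1-s)^m}{1-(1-2s)^m}\to\frac{1}{2m}>\frac{1}{2(m+1)}$. Your formula error masks this failure, because with exponent $2m$ the same limit is $\frac{1}{4m}<\frac{1}{2(m+1)}$, which is exactly why your inequality ``looked'' provable with ratio $\tfrac{m+1}{2m}$ at small $s$. The relaxation $A^2/B$ is hopelessly lossy precisely where $B\to 0$, i.e.\ near $\gamma=0$, whereas the true quantity satisfies $\sqrt{A^2+B^2/4}-\tfrac{B}{2}\le A\to 0$ there. To repair the argument you must either keep the exact $\beta$-maximum throughout — this is what the paper does: it writes $f(\gamma)=\tfrac18\left(1-\cos^{2n-2}(2\gamma)\right)\left(\sqrt{1+x(\gamma)^2}-1\right)$, shows $f(\gamma)<\tfrac{1}{4n-1}$, and reduces this to positivity of the polynomial $g(t)=\tfrac{4}{(4n-1)^2}+\tfrac{1}{4n-1}\left(1-(2t-1)^{2n-2}\right)-(1-t)t^{2n-2}$ on $[0,1]$ via a critical-point analysis — or genuinely split into regimes, using the complementary bound $f\le A$ for small $s$ and $A^2/B$ only where $B$ is bounded away from $0$. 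Your instinct to split the range of $s$ is the right one, but as written it is applied to an inequality that is false near $s=0$, so the plan cannot establish even $r<1$.
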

\begin{proof}
We first show that RQAOA$_{1}$ achieves the approximation ratio 1.
Let $$H_{2n}=\frac{1}{2} \sum_{(i,j) \in E}\left(I-Z_iZ_j\right),$$
where $i,j$ are vertices of $K_{2n}$.
Consider a cost function of the form
$$
C_{2n}(\mathbf{x})=\sum_{(i,j) \in E_{2n}}\left(I-x_ix_j\right),
$$
where $E_{2n}$ denotes the edge set of $K_{2n}$.

Suppose that 
\[
({\beta}^*,{\gamma}^*)=\textrm{argmax}_{\beta,\gamma} \bra{\psi_1({\beta},{\gamma})}H_{2n}\ket{\psi_1({\beta},{\gamma})}.
\]
The exact form for the expectation value for QAOA with $p=1$ has been known in~\cite{WHJR18} and it allows us to calculate $M_{ij}$ as follows.
For each edge $(i,j)$,
\begin{eqnarray*}
M_{ij} &=& \bra{\psi_{  1}({ {\beta^*},{\gamma^*}})}Z_iZ_j 
\ket{\psi_{  1}({ {\beta^*},{\gamma^*}})} \\
&=& \frac{1}{4} \sin4\beta^* \cdot \sin\gamma^* \cdot 2\cos^{2n-2}\gamma^*  \\
&& - \frac{1}{4}\sin^22\beta^* \left( 1-\cos^{2n-2}2\gamma^* \right)  \\
&=& \frac{1}{2} \sin4\beta^* \cdot \sin\gamma^* \cdot \cos^{2n-2}\gamma^* \\ &&-\frac{1}{8}\left(1-\cos4\beta^*\right)
\left[1-\left( 2\cos^2\gamma^* -1 \right)^{n-1}
\right].
\end{eqnarray*}
For the recursion step, we can pick a pair $(k,l)$ in $E$ randomly since all $M_{ij}$'s coincide. Without loss of generality, assume that $(k,l)=(2n-1,2n)$.
It can be easily shown that $M_{i,j}<0$ for all edges $(i,j)$ and thus, by imposing the constraint
\begin{equation}
\label{eq:constraint1}
x_{2n}=-x_{2n-1},
\end{equation}
the RQAOA removes the variable $x_{2n}$ from the cost function $C_{2n}(\mathbf{x})$, we obtain the new cost function 
\begin{eqnarray}
C'_{2n}(\mathbf{x}')
&=&\frac{1}{2}|E_{2n}|-\frac{1}{2} \left(
x_1x_{2n}+ \cdots 
+x_{2n-1}x_{2n}\right) \nonumber \\
&&-\frac{1}{2}\sum_{(i,j)\in E_{2n-1}}x_ix_j\nonumber \\
&=&\frac{1}{2}|E_{2n}|+\frac{1}{2} \left(
x_1x_{2n-1}+ \cdots 
+x_{2n-1}x_{2n-1}\right) \nonumber \\
&&-\frac{1}{2}\sum_{(i,j)\in E_{2n-1}}x_ix_j\nonumber \\
&=& \frac{1}{2}|E_{2n}|+\frac{1}{2}-\frac{1}{2}
\sum_{(i,j)\in E_{2n-2}}x_ix_j. \nonumber \\
&=& \frac{1}{2}|E_{2n}|+\frac{1}{2}-\frac{1}{2}|E_{2n-2}|+C_{2n-2}(\mathbf{x}'),
\label{eq:C_6}
\end{eqnarray}
where $\mathbf{x}' \in \{-1,1\}^{2n-2}$.

\begin{figure}
	\centering
	\includegraphics[width=\linewidth]{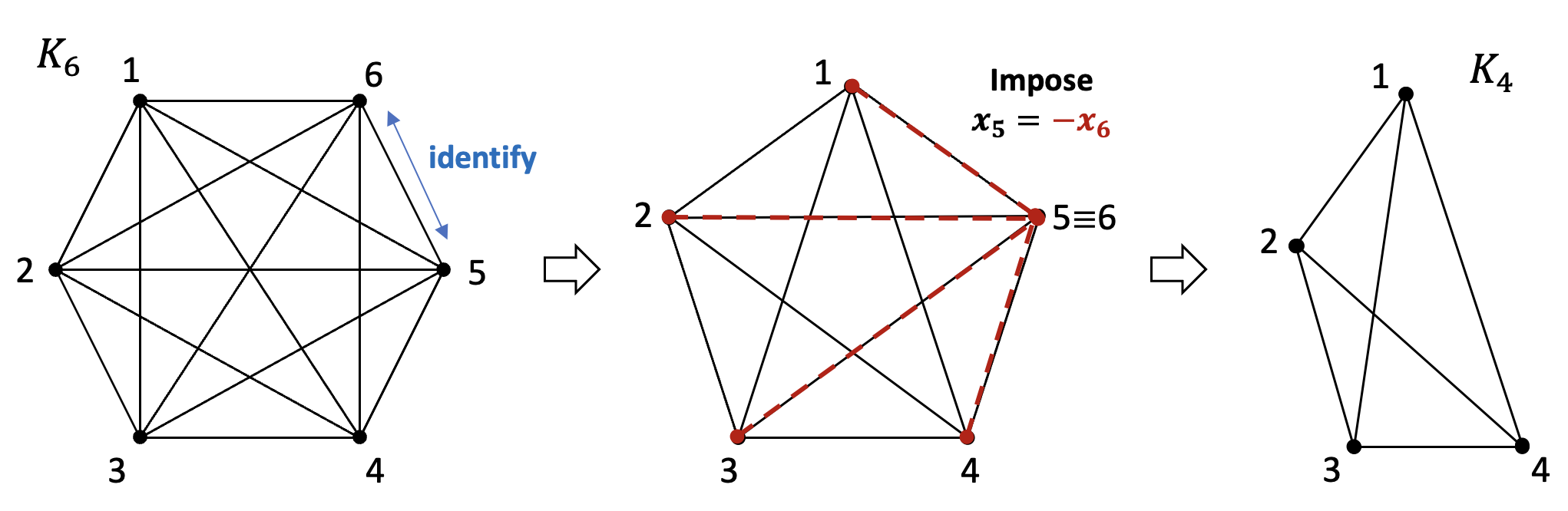}
	\caption{A schematic diagram showing the change of the cost function after one iteration of the RQAOA$_1$ through the graph; The RQAOA$_1$ eliminates the variable $x_6$ in $K_6$ by imposing the constraint $x_5=-x_6$ on the cost function $C_6(\mathbf{x})$. In the middle graph, the red dashed edges indicate the terms including $x_6$ in {  $C_6(\mathbf{x})$,} and these terms 
    {  are cancelled} out after substituting $-x_5$ for $x_6$ due to the different sign. 
	As a consequence, we obtain the new cost function in terms of $C_4(\mathbf{x}')$ with additional terms as we can see in Eq.~(\ref{eq:C_6}).
	}
	\label{Fig1}
\end{figure}

Similarly, the RQAOA eliminates the variable $x_{2n-2}$ by imposing the constraint
\begin{equation}
x_{2n-2}=-x_{2n-3}
\end{equation}
on the cost function $C_{2n}(\mathbf{x})$, and we have the next cost function 
\begin{eqnarray*}
C''_{2n}(\mathbf{x}'')
&=&\frac{1}{2}|E_{2n}|+\frac{1}{2}-\frac{1}{2}
\sum_{(i,j)\in E_{2n-2}}x_ix_j \\
&=& \frac{1}{2}|E_{2n}|+\frac{1}{2}+\frac{1}{2} \left(
1 - \sum_{(i,j)\in E_{2n-4}}x_ix_j \right) \\
&=& \frac{1}{2}|E_{2n}|+\frac{1}{2}+\frac{1}{2} 
-\frac{1}{2} |E_{2n-4}| + C_{2n-4}(\mathbf{x}''),
\end{eqnarray*}
where $\mathbf{x}'' \in \{-1,1\}^{2n-4}$.
By imposing the following $k$ constraints inductively 
\begin{eqnarray}
x_{2n}&=&-x_{2n-1} \nonumber\\ 
x_{2n-2}&=&-x_{2n-3} \nonumber\\
&\vdots& \nonumber \\
x_{2n-(2k-2)}&=&-x_{2n-(2k-1)},
\label{eq:constraint2}
\end{eqnarray}
the cost function $C_{2n}(\mathbf{x})$ after eliminating variables $x_{2n}, x_{2n-2}, \dots,  x_{2n-(2k-2)}$ becomes 
\begin{eqnarray*}
\frac{1}{2}|E_{2n}|+\frac{k}{2}
-\frac{1}{2} |E_{2n-2k}| + C_{2n-2k}(\mathbf{\tilde{x}}),
\end{eqnarray*}
where $\mathbf{\tilde{x}} \in \{-1,1\}^{2n-2k}$.
Now, we observe that
\[
\max_{\mathbf{x} \in \{-1,1\}^{2n}}  C_{2n}(\mathbf{x})
\]
is not less than 
\begin{eqnarray*}
\max_{\mathbf{x} \in \mathcal{X}} C_{2n}(\mathbf{x} ) 
&=& \frac{1}{2}n(2n-1) - \frac{1}{4}(2n-2k)(2n-2k-1) \\
&&+ \frac{k}{2} + 
 \max_{\mathbf{\tilde{x}} \in \{-1,1\}^{2n-2k}} C_{2n-2k}(\mathbf{\tilde{x}}) \\
&=& \frac{1}{2}n(2n-1) - \frac{1}{2}(n-k)(2n-2k-1) \\
&&+ \frac{k}{2} + (n-k)^2 \\
&=& n^2, 
\end{eqnarray*}
where $\mathcal{X}$ is the subset of $\{-1,1\}^{2n}$ satisfying the $k$ constraints in Eqs.~(\ref{eq:constraint2}).
This completes the proof of the {  first} 
statement.

We now show that 
the approximation ratio of QAOA$_{1}$ is strictly less than 1 for every $n$.
In order to obtain the bounds for the approximation ratio of QAOA${_1}$, we take the exact formula in~\cite{WHJR18} once again. For a complete graph with $2n$ vertices and $n \ge 2$, we have 
\begin{eqnarray}
\left< C_{ij} \right>
&=& \frac{1}{2}-\frac{1}{4} \sin^2(2\beta)\left(1-\cos^{2n-2}(2\gamma) \right) 
\nonumber\\
&&+\frac{1}{2}\sin(4\beta)\sin\gamma\cos^{2n-2}(\gamma),
\end{eqnarray}
where $C_{ij}= \frac{1}{2}I-Z_iZ_j$ and 
\[\left< C_{ij} \right> {  =} \bra{\psi_1({ {\beta},{\gamma}})}C_{ij}\ket{\psi_1({ {\beta},{\gamma}})}.\]
The QAOA$_{1}$ for the MAX-CUT problem on the complete graph $K_{2n}$ maximizes the expectation value
\begin{eqnarray*}
F_1({  {\beta}, {\gamma}})
&=& \bra{\psi_1({ {\beta},{\gamma}})}H_C\ket{\psi_1({ {\beta},{\gamma}})} \\
&=& |E_{2n}| \left< C_{ij} \right>,
\end{eqnarray*}
or, equivalently, it maximizes the following function with respect to the parameters {  $\beta$ and $\gamma$}.
\begin{eqnarray*}
f({  \beta,\gamma})&:=&\frac{1}{2}\sin(4\beta)\sin\gamma\cos^{2n-2}(\gamma) \\
&&-\frac{1}{4} \sin^2(2\beta)\left(1-\cos^{2n-2}(2\gamma) \right) \\
&=& \frac{1}{2}\sin(4\beta)\sin\gamma\cos^{2n-2}(\gamma) \\
&&-\frac{1}{8}\left(1-\cos(4\beta) \right)\left(1-\cos^{2n-2}(2\gamma) \right).
\end{eqnarray*}

Let us first differentiate the function $f$ by $\beta$ to obtain the optimal $\beta$ as a function of $\gamma$.
\begin{eqnarray*}
\frac{\partial f}{\partial \beta} &=& 
2 \cos(4\beta)\sin\gamma\cos^{2n-2}(\gamma) \\
&&- \frac{1}{2} \sin(4\beta) \left(1-\cos^{2n-2}(2\gamma) \right).
\end{eqnarray*}
If $\cos^{2n-2}(2\gamma)=1$, then $\cos(2\gamma)=\pm 1$ and so, 
$$
\sin^2(\gamma)=\frac{1-\cos(2\gamma)}{2}=0
$$
or
$$
\cos^2(\gamma)=\frac{1+\cos(2\gamma)}{2}=0
$$
which implies that 
$f({  \beta,\gamma})=0$.

Now, we assume that $\cos^{2n-2}(2\gamma) \ne 1$. Then we have
\begin{equation}
\label{eq:beta}
    \frac{\partial f}{\partial \beta} =0 \iff \tan(4\beta)=\frac{4\sin\gamma\cos^{2n-2}(\gamma)}{1-\cos^{2n-2}(2\gamma)},
\end{equation}
and hence the optimal parameter {  $\beta_\gamma^*$} satisfies
\[
{  \beta_\gamma^*} = \arctan \left(\frac{4\sin\gamma\cos^{2n-2}(\gamma)}{1-\cos^{2n-2}(2\gamma)} \right).
\]
Using the trigonometric identities 
\begin{eqnarray*}
\sin \left(\arctan(x) \right) &=& \frac{x}{\sqrt{1+x^2}}, \\
\cos \left(\arctan(x) \right) &=& \frac{1}{\sqrt{1+x^2}}
\end{eqnarray*}
for $x>0$, we obtain
\begin{eqnarray*}
f({ \beta_\gamma^*},\gamma)&=&
\frac{1}{2}\sin \left( \arctan x(\gamma) \right)\sin\gamma\cos^{2n-2}(\gamma) \\
&&- 
\frac{1}{8}\left[1-\cos\left( \arctan x(\gamma) \right) \right]\left(1-\cos^{2n-2}(2\gamma) \right),
\end{eqnarray*}
where 
\[
x(\gamma)=\frac{4\sin\gamma\cos^{2n-2}(\gamma)}{1-\cos^{2n-2}(2\gamma)}.
\]
For the simplicity of calculation, let $d=2n-2$, $s_{\gamma}=\sin\gamma$, $c_{\gamma}=\cos\gamma$, then
$$
x(\gamma) 
=\frac{4s_{\gamma}c_{\gamma}^d}{1-\left(2c_{\gamma}^2-1\right)^d},
$$
and the function $f({ \beta_\gamma^*},\gamma)$ can be rewritten and simplified by using the constraint in~Eq.~(\ref{eq:beta}) as
\begin{eqnarray*}
f({ \beta_\gamma^*},\gamma)&=&
\frac{1}{2}\frac{x(\gamma)}{\sqrt{1+x(\gamma)^2}}
s_{\gamma}c_{\gamma}^d\\
&&- \frac{1}{8}\left(1-\frac{1}{\sqrt{1+x(\gamma)^2}} \right)\left(1-\left(2c_{\gamma}^2-1\right)^d \right) \\
&=&\frac{1}{8}\frac{x(\gamma)}{\sqrt{1+x(\gamma)^2}}
\left(1-\left(2c_{\gamma}^2-1\right)^d \right) x(\gamma)\\
&&-  \frac{1}{8}\left(1-\frac{1}{\sqrt{1+x(\gamma)^2}} \right)\left(1-\left(2c_{\gamma}^2-1\right)^d \right) \\
&=& \frac{1}{8}\left(1-\left(2c_{\gamma}^2-1\right)^d \right)
\left(\sqrt{1+x(\gamma)^2}-1 \right) \\
&=& \frac{1}{8}\sqrt{\left(1-(2c_{\gamma}^2-1)^d\right)^2 +16s_{\gamma}^2c_{\gamma}^{2d}}\\
&&-\frac{1}{8}\left(1-(2c_{\gamma}^2-1)^d\right).
\end{eqnarray*}

Now we let $f(\gamma):=f(\beta_\gamma^*,\gamma)$. Then
we can show that 
\begin{eqnarray}
f(\gamma)
&<& \frac{1}{4n-1}
\label{eq:condition_f}
\end{eqnarray}
for all $\gamma$,
since the inequality in Eq.~(\ref{eq:condition_f}) is equivalent to
the following inequality
\begin{eqnarray}
 \frac{4}{(4n-1)^2}+\frac{1}{4n-1}\left( 1-\left(2c_{\gamma}^2-1\right)^d \right) > (1-c_{\gamma}^2)c_{\gamma}^{2d}, 
\end{eqnarray}
and a function g(t) defined as 
\[
g(t):=\frac{4}{(4n-1)^2}+\frac{1}{4n-1}\left( 1-\left(2t-1\right)^d \right) - (1-t)t^d
\]
can be shown to be strictly greater than zero for all $t\in [0,1]$ 
(See Appendix~\ref{sec:g(t)} for the details).
Therefore, the approximation ratio of QAOA$_1$ for the MAX-CUT problem on complete graphs $K_{2n}$ is
\begin{eqnarray*}
\nonumber
\frac{F_{1}({\beta}^*,{\gamma}^*)}{\max_{\mathbf{x}}C_{2n}(\mathbf{x})} 
&=& \frac{\max_{{\beta},{\gamma}}F_1({\beta},{\gamma})}{n^2} \\
&=& \frac{|E_{2n}|\left(\frac{1}{2}+\max_{\gamma}f(\gamma)\right)}{n^2} 
 \\
&=&\frac{(2n-1)\left(\frac{1}{2}+\max_{\gamma}f(\gamma)\right)}{n} \\
&<& 1-\frac{1}{2n(4n-1)} 
 \\
&<& 1-\frac{1}{8n^2}.
\end{eqnarray*}
This completes the proof of the second statement.
\end{proof}

\section{Conclusion}
\label{sec:conclusion}
In this paper, we have analyzed the performance of the level-1 RAOA and the level-1 QAOA to solve the MAX-CUT problem on complete graphs with $2n$ vertices. 
Moreover, we have proved that the level-1 RQAOA achieves the approximation ratio exactly one, which means that it can always find the exact solution. 
On the other hand, we have shown that the approximation ratio of the level-1 QAOA is strictly less than $1-1/8n^2$ for any $n$. 

One of the most interesting points in this work is that QAOA has a limitation to solve the MAX-CUT problem on complete graphs which 
{  are contained in} the cases that we {  can} intuitively know what the maximum cut is. 
Let us consider the following situation. 
Assume that there is a black box which can construct the MAX-CUT Hamiltonian corresponding to {  a given input graph,} and we {  can} perform QAOA based on the Hamiltonian to find the solution. 
Our result ensures that {  when} the given graph {  is} a complete graph, QAOA cannot find the solution exactly while RQAOA can find the exact solution. 
Hence, our {  result provides} us with another analytic evidence demonstrating that QAOA has a limited performance, and 
RQAOA overcomes the limitation of the QAOA in NISQ era.

At this point, can we say that the RQAOA really outperforms the original QAOA to solve the MAX-CUT problem? Can we give its analytical proof? These questions are still open for other instances even {  though} we have already had the positive answer for complete graphs as well as cycle graphs. 
In {  particular, the properties of graphs addressed in the previous recursive step of the RQAOA cannot generally be preserved in the next one.} 
For example, 
{  when performing the RQAOA on a regular graph, 
the graph does not still become a regular one after a recursive step in general. 
However, for complete graphs and cycle graphs 
the properties remain intact 
even in the next {  recursive} step} 
although the order and size of the graphs are reduced. 
{  On this account,} it would be more complicated to analytically calculate the approximation ratio of {  the level-1 RQAOA} 
to solve the MAX-CUT problem on other graphs.

Very recently, a limitation of the RQAOA 
{  was} also known, and the reinforcement learning enhanced {  variant of the} RQAOA, called {  the} RL-RQAOA, was proposed to improve the RQAOA~\cite{PJBD22}. {  The authors in Ref.~\cite{PJBD22}} numerically showed that {  the} RL-RQAOA outperforms the RQAOA on some instances. 
Thus, it {  would be an} 
interesting and important future work to analytically prove that {  the} RL-RQAOA outperforms the RQAOA on certain instances.
 
%

\begin{acknowledgements}
E.B. thanks Kunal Marwaha for helpful discussions. 
This work was supported by the National Research Foundation of Korea (NRF)
grant funded by the Ministry of Science and ICT (MSIT) (Grants No. NRF-2020M3E4A1079678 and No. NRF-2022R1C1C2006396).
S.L. acknowledges support from the MSIT, Korea, 
under the Information Technology Research Center support program (Grant No. IITP-2022-2018-0-01402) 
supervised by the Institute for Information and Communications Technology Planning and Evaluation 
and Creation of the Quantum Information Science R\&D Ecosystem (Grant No. 2022M3H3A106307411) 
through the NRF funded by the MSIT.
\end{acknowledgements}

\bibliography{RQAOA}

\clearpage
\appendix
\begin{widetext}

\section{The positivity of the function $g(t)$}
\label{sec:g(t)}
In this appendix, we want to show that for all $t\in [0,1]$, 
$$
g(t):=\frac{4}{(4n-1)^2}+\frac{1}{4n-1}\left( 1-\left(2t-1\right)^{2n-2} \right) - (1-t)t^{2n-2} > 0.
$$

To find the minimum of $g(t)$, we  
{  consider} the equation
\begin{eqnarray}
\label{eq:condition_g}
g'(t) = -\frac{4n-4}{4n-1}(2t-1)^{2n-3}+t^{2n-3}\left(-(2n-2)+(2n-1)t\right)=0.
\end{eqnarray}
Since $g$ is continuous, we need to see that $g(0)>0$, $g(1) >0$, and $g(t^*)>0$ for all critical points $t^* \in \left[0,1\right]$.

Observe that 
$$
g(0)=g(1)=\frac{1}{(4n-1)^2}>0.
$$
{  For any critical point $t^* \in (0,1)$, it is clear that}
\begin{equation}
g'(t^*)=-\frac{4n-4}{4n-1}(2t^*-1)^{2n-3} + {t^*}^{2n-3}\left(-(2n-2)+(2n-1)t^*\right) = 0,
\end{equation}
{  that is,}
\begin{equation}
\label{eq:condition_g'}
\frac{4n-4}{4n-1}(2t^*-1)^{2n-3} =  {t^*}^{2n-3}\left(-(2n-2)+(2n-1)t^*\right).
\end{equation}
Now, by imposing the condition in Eq.~(\ref{eq:condition_g'}) on the function $g$, we have
\begin{eqnarray}
g(t^*)&=&\frac{4}{(4n-1)^2}+\frac{1}{4n-1}\left( 1-\left(2t^*-1\right)^{2n-2} \right) - (1-t^*){t^*}^{2n-2} \nonumber \\
&=& \frac{4}{(4n-1)^2}+\frac{1}{4n-1}
\left[ 1-\frac{4n-1}{4n-4}(2t^*-1){t^*}^{2n-3}(-(2n-2)+(2n-1)t^*) \right] -(1-t^*){t^*}^{2n-2} \nonumber \\
&=& \frac{4}{(4n-1)^2}+\frac{1}{4n-1}
+{t^*}^{2n-3} \left( 
-\frac{1}{2n-2}{t^*}^2 + \frac{2n-1}{4n-4}t^*-\frac{1}{2}
\right) \nonumber \\
&=& \frac{4}{(4n-1)^2}+\frac{1}{4n-1}
+\frac{{t^*}^{2n-3}}{2n-2} \left( 
-\left(t^*-\frac{2n-1}{4} \right)^2
+ \frac{(2n-1)^2}{16}-(n-1)
\right). \label{eq:g}
\end{eqnarray}
If we regard the third term in {  Eq.~(\ref{eq:g})} as a function of {  $t^*$}, we can easily see that it is decreasing on $(0,1)$.
Therefore,
\begin{eqnarray*}
g(t^*) 
&=& \frac{4}{(4n-1)^2}+\frac{1}{4n-1}
+\frac{{t^*}^{2n-3}}{2n-2} \left( 
-\left(t^*-\frac{2n-1}{4} \right)^2
+ \frac{(2n-1)^2}{16}-(n-1)
\right) \\
&>& \frac{4}{(4n-1)^2}+\frac{1}{4n-1}
+\frac{1}{2n-2} \left(
-\left(1-\frac{2n-1}{4} \right)^2
+ \frac{(2n-1)^2}{16}-(n-1)
\right)\\
&=& \frac{4}{(4n-1)^2}+\frac{1}{4n-1}
+\frac{1}{2n-2} \left(-\frac{1}{2} \right)
\\
&=& \frac{4n-13}{4(n-1)(4n-1)^2}
\\
&>& 0
\end{eqnarray*}
{  for all $n \ge 4$. 
Hence} $g(t)>0$ for all $t \in [0,1]$ {  when $n \ge 4$.}

For $n<4$,  
we can prove that $g(t)>0$ on $[0,1]$ 
from direct calculations as follows.
When $n=2$, {  $g(t)$ becomes}
\begin{eqnarray*}
    g(t)&=&\frac{4}{49} + \frac{1}{7}\left( 1-(2t-1)^2\right) - (1-t)t^2 \\
    &=&\frac{1}{49} \left( 49t^3-77t+28t+4\right)>0
\end{eqnarray*}
since {  it can easily be shown} that $g(t)$ is increasing on $[0,1]$ and $g(0)=4>0$.
When $n=3$, {  $g(t)$ becomes}
$$
g(t)=\frac{1}{121} \left( t^5 - 177t^4 +352t^3-264t^2 +88t +4\right)
$$
and we can show that it is concave on $[0,1]$ and $g(0)=g(1)=4>0$.
Thus, $g(t)>0$ on the interval $[0,1]$.

\vfill \clearpage
\end{widetext}

\end{document}